\newtheorem{thm}{Theorem}
\newtheorem{lm}{Lemma}
\newcommand{\BEAS}{\begin{eqnarray*}}
\newcommand{\EEAS}{\end{eqnarray*}}
\newcommand{\BEQ}{\begin{equation}}
\newcommand{\EEQ}{\end{equation}}
\newcommand{\BIT}{\begin{itemize}}
\newcommand{\EIT}{\end{itemize}}
\newcommand{\ie}{{\it i.e.}}
\newcommand{\intr}{\mathop{\bf int}}
\newcommand{\tr}{\mathbf{Tr}}
\newcommand{\symm}{\mathbf{S}}
\newcommand{\reals}{{\mbox{\bf R}}}
\newcommand{\var}{{\mathbf{Var}}}
\newcommand{\expec}{\mathbf{E}}
\title{Adaptive Importance Sampling\\
via Stochastic Convex Programming}
\author[1]{Ernest K. Ryu}
\author[1]{Stephen P. Boyd}
\affil[1]{Institute for Computational and Mathematical Engineering, Stanford University}
\date{\today}
\begin{document}
\maketitle
\begin{abstract}
We show that the variance of the Monte Carlo estimator
that is importance sampled from an
exponential family is a convex function of the natural parameter of the distribution.
With this insight, we propose an adaptive importance sampling algorithm
that simultaneously improves the choice of sampling distribution while accumulating a
Monte Carlo estimate.
Exploiting convexity, we  prove that the method's unbiased estimator
has variance that is asymptotically optimal over the exponential family.
\end{abstract}

\noindent


\section{Introduction}
Consider the problem of approximating the expected value (or integral) 
\[
I=\mathbf{E} \phi(X)=\int \phi(x)f(x)\; dx,
\]
where $X\sim f$ is a random variable on $\reals^k$
and $\phi:\reals^k\rightarrow \reals$.

The standard Monte Carlo method estimates $I$ by taking independent identically 
distributed (IID) samples $X_1,X_2,\ldots, X_n\sim f$ and using
\[
\hat{I}_n^\mathrm{MC}=\frac{1}{n}\sum^n_{i=1}\phi(X_i).
\]
This estimator is \emph{unbiased}, \ie,
$\expec \hat{I}^\mathrm{MC}_n=I$,
and has variance
\[
\var( \hat{I}^\mathrm{MC}_n)=\frac{1}{n}
\var_{X\sim f}\left[
\phi(X)\right]
=
\frac{1}{n}
\left(
\int \phi^2(x)f(x)\;dx-I^2\right).
\]

To reduce the variance of the estimator, the standard figure of merit,
one can use importance sampling:
choose a \emph{sampling} (\emph{importance}) distribution
$\tilde{f}$ satisfying $\tilde{f}(x)>0$ whenever $\phi(x)f(x)\ne 0$,
take IID samples $X_1,X_2,\ldots,X_n\sim \tilde{f}$
(as opposed to sampling from $f$, the \emph{nominal} distribution)
and use
\[
\hat{I}^{\mathrm{IS}}_n=\frac{1}{n}\sum^n_{i=1}\phi(X_i)\frac{f(X_i)}{\tilde{f}(X_i)}
.
\]
Again, the estimator is unbaised,
\ie, $\expec \hat{I}^\mathrm{IS}_n=I$,
and has variance
\[
\var( \hat{I}^\mathrm{IS}_n)=\frac{1}{n}
\var_{X\sim \tilde{f}}\left[\frac{\phi(X)f(X)}{\tilde{f}(X)}\right]
=\frac{1}{n}
\left(
\int 
\frac{\phi^2(x)f^2(x)}{\tilde{f}(x)}\;dx
-I^2
\right)
.
\]
When $\tilde{f}=f$, importance sampling reduces to standard Monte Carlo.
Choosing $\tilde{f}$ wisely can reduce the variance, 
but this can be difficult in general.
One approach is to use a priori information on the integrand $\phi(x)f(x)$
to manually find an appropriate sampling distribution $\tilde{f}$,
perhaps through several informal iterations \cite{sadowsky1990, smith1997, glasserman1999, srinivasan2002, madras2002, robert2004, owen2013}.
Another approach is to automate the process of finding the sampling distribution
through adaptive importance sampling.

In adaptive importance sampling, one adaptively improves the sampling distribution
while simultaneously accumulating the estimate for $I$.
A particular form of importance sampling
generates a sequence of sampling distributions
$\tilde{f}_1,\tilde{f}_2,\ldots$ 
and a series of samples $X_1\sim \tilde{f}_{1},\,X_2\sim \tilde{f}_2,\,\ldots$
and forms the estimate
\[
\hat{I}_n^\mathrm{AIS} = \frac{1}{n}\sum^n_{i=1}\phi(X_i)\frac{f(X_i)}{\tilde{f}_i(X_i)}.
\]
At each iteration $n$, the sampling distribution $\tilde{f}_n$, which is itself random, is adaptively
determined based on the past data,
$\tilde{f}_n,\ldots,\tilde{f}_{n-1}$ and $X_1,\ldots,X_{n-1}$.
Again, $\hat{I}_n^\mathrm{AIS}$ is unbiased, \ie, $\expec \hat{I}_n^\mathrm{AIS}=I$, and
\[
\var(\hat{I}_n^\mathrm{AIS}) 
=
\frac{1}{n^2}
\sum^n_{i=1}
\expec_{\tilde{f}_i}
\var_{X_i\sim \tilde{f}_i}\left[\frac{\phi(X_i)f(X_i)}{\tilde{f}_i(X_i)}\right]
=
\frac{1}{n^2}
\sum^n_{i=1}
\expec_{\tilde{f}_i}
\left(
\int 
\frac{\phi^2(x)f^2(x)}{\tilde{f}^2(x)}\;dx
-I^2
\right),
\]
where $\expec_{\tilde{f}_i}$ denotes the expectation over the random sampling distribution $\tilde{f}_i$.
Again, when $\tilde{f}_i=\tilde{f}$ for all $i$, adaptive importance sampling reduces to
standard (non-adaptive) importance sampling.
Now determining how to choose $\tilde{f}_n$ at each iteration fully specifies the method.

In this paper, we propose an instance of adaptive importance sampling,
which we call Convex Adaptive Monte Carlo (\textsc{Convex AdaMC}).
First, we choose an exponential family of distributions $\mathcal{F}$
as the set of candidate sampling distributions.
Define 
$T:\reals^k\rightarrow \reals^p$ and $h:\reals^k\rightarrow \reals_+$.
Then our density function is
\[
f_\theta(x)=\exp \left(\theta^TT(x)-A(\theta)\right)h(x),
\]
where $A:\reals^p\rightarrow \reals\cup\{\infty\}$, defined as
\[
A(\theta)
=
\log \int \exp(\theta^TT(x))h(x)\;dx,
\]
serves as a normalizing factor.
(When $A(\theta)=\infty$, we define $f_\theta=0$ and remember that this does not define a distribution.)
Finally, let $\Theta\subseteq \reals^p$ be a convex set,
and our exponential family is
$\mathcal{F}=\left\{f_\theta\,\mid\,\theta\in \Theta\right\}$,
where $\theta$ is called the \emph{natural parameter} of $\mathcal{F}$.
Note that the choice of $T$, $h$, and $\Theta$ fully specifies our family $\mathcal{F}$.

Next, define $V:\reals^p\rightarrow \reals\cup\{\infty\}$ to be the per-sample
variance of the
importance sampled estimator with sampling distribution $f_\theta$,
\[
V(\theta)=\var_{X\sim f_\theta}
\left[\frac{\phi(X)f(X)}{f_\theta(X)}\right]
=\int \frac{\phi^2(x)f^2(x)}{f_\theta(x)}\;dx-I^2.
\]
(So the importance sampled estimator using $n$ IID samples from $f_\theta$
has variance $V(\theta)/n$.)
A natural approach is to solve
\begin{equation}
\begin{array}{ll}
\mbox{minimize}& V(\theta)\\
\mbox{subject to}& \theta\in \Theta,
\end{array}
\label{p-min-v}
\end{equation}
where $\theta$ is the optimization variable, as this will give us the best
sampling distribution among $\mathcal{F}$ to importance sample from.
We write $V^\star$ to denote the optimal value, \ie, the optimal per-sample variance
over the family.

The first key insight of this paper is that $V$ is a convex function,
a consequence of $\mathcal{F}$ being an exponential family.
Roughly speaking, one can efficiently find a \emph{global} minimum of a convex functions
through standard methods if one can compute the function value and its gradient
\cite{nesterov2004}.
This fact, however, is not directly applicable to our setting as
evaluating $V(\theta)$ or $\nabla V(\theta)$ for any given $\theta$
is in general as hard as evaluating $I$ itself.

The second key insight is that we can minimize
the convex function $V$
through a standard algorithm of stochastic optimization,
\emph{stochastic gradient descent},
while simuiltaneously accumulating an estimate for $I$.
Because of convexity, we can prove theoretical guarantees.

In \textsc{Convex AdaMC}, we generate a sequence of sampling distribution parameters
$\theta_1,\theta_2,\ldots$ 
and a series of samples $X_1\sim f_{\theta_1},\,X_2\sim f_{\theta_2},\ldots$,
with which we form the estimate
\[
\hat{I}_n^\mathrm{AMC} = \frac{1}{n}\sum^n_{i=1}\phi(X_i)\frac{f(X_i)}{f_{\theta_i}(X_i)}.
\]
Again, $\hat{I}_n^\mathrm{AMC}$ is unbiased, \ie, $\expec \hat{I}_n^\mathrm{AMC}=I$.
Furthermore, we show that
\begin{equation}
\var(\hat{I}_n^\mathrm{AMC}) 
=\frac{1}{n}V^\star + \mathcal{O}\left(\frac{1}{n^{3/2}}\right) =
\frac{1}{n}\left( V^\star + \mathcal{O}\left(\frac{1}{n^{1/2}}\right)\right).
\label{e-asymp-var}
\end{equation}
This shows that the per-sample variance of \textsc{Convex AdaMC}
converges to the optimal per sample variance over our family $\mathcal{F}$;
\ie,
our estimator \textsc{Convex AdaMC} asymptotically performs as well as
any (adaptive or non-adaptive) importance sampling estimator using
sampling distributions from $\mathcal{F}$.
In particular, \textsc{Convex AdaMC} does not suffer from 
becoming trapped in (non-optimal) local minima,
a problem other adaptive importance sampling methods can have.

\section{Convexity of the variance}
\label{s-variance}
Let's establish a few important properties of our variance function
\[
V(\theta)=\int \phi^2(x)f^2(x)\exp(A(\theta)-\theta^T T(x))h(x)\;dx
-I^2.
\]
When $A(\theta)=\infty$, we define $V(\theta)=\infty$.
Not only is this definition natural but is also convenient
since $V(\theta)=\infty$ now indicates that $\theta$ is invalid
either because the variance is infinite
or because $\theta$ doesn't define a sampling distribution.

Recall that a function 
$V:\reals^p \rightarrow\reals \cup\{\infty\}$ is convex if
\[
V(\eta \theta_1+(1-\eta) \theta_2)\le \eta V(\theta_1)+(1-\eta)V(\theta_2)
\]
holds for any $\eta\in[0,1]$ and $\theta_1,\theta_2\in \Theta$.
Convexity is important because it allows us to prove a theoretical guarantee.
\begin{thm}
The variance of the importance sampling estimator $V(\theta)$
is a convex function of $\theta$, the natural parameter of the exponential family.
\end{thm}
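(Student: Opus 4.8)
The plan is to reduce the claim to a pointwise statement about the integrand and then integrate a convexity inequality. First I would discard the constant $-I^2$, which does not affect convexity, and focus on
\[
g(\theta)=\int \phi^2(x)f^2(x)h(x)\,\exp\bigl(A(\theta)-\theta^T T(x)\bigr)\;dx,
\]
noting that the nonnegative weight $w(x)=\phi^2(x)f^2(x)h(x)$ (recall $h\ge 0$) does not depend on $\theta$. If I can show that for each fixed $x$ the map $\theta\mapsto \exp\bigl(A(\theta)-\theta^T T(x)\bigr)$ is convex, then for any $\theta_1,\theta_2\in\Theta$ and $\eta\in[0,1]$ I would multiply the pointwise convexity inequality by $w(x)\ge 0$ and integrate, using monotonicity of the integral over values in $\reals\cup\{\infty\}$, to obtain $g(\eta\theta_1+(1-\eta)\theta_2)\le \eta g(\theta_1)+(1-\eta)g(\theta_2)$. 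This is exactly convexity of $g$, and hence of $V$.

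So the crux is the pointwise convexity of $\theta\mapsto \exp\bigl(A(\theta)-\theta^T T(x)\bigr)$, which I would establish by the standard composition rule: since $\exp$ is convex and nondecreasing, $\exp\circ u$ is convex whenever $u$ is convex. Here $u(\theta)=A(\theta)-\theta^T T(x)$ is convex because the term $-\theta^T T(x)$ is affine (hence convex) and $A$ itself is convex. The convexity of the log-partition function $A$ is the one ingredient needing its own argument, and I would prove it by H\"older's inequality: for $\theta=\eta\theta_1+(1-\eta)\theta_2$,
\[
\int \exp\bigl(\theta^T T(x)\bigr)h(x)\,dx=\int \bigl[\exp(\theta_1^T T(x))h(x)\bigr]^{\eta}\bigl[\exp(\theta_2^T T(x))h(x)\bigr]^{1-\eta}\,dx,
\]
and applying H\"older with conjugate exponents $1/\eta$ and $1/(1-\eta)$; taking logarithms yields $A(\theta)\le \eta A(\theta_1)+(1-\eta)A(\theta_2)$.

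Finally I would verify the extended-valued bookkeeping so the inequalities remain meaningful when quantities are infinite. When $A(\theta)=\infty$ the integrand is $+\infty$, which is consistent with the convention $V(\theta)=\infty$; on the right-hand side of any convexity inequality an infinite term makes the bound hold trivially, while on the left $A$ remains convex as an $\reals\cup\{\infty\}$-valued function whose effective domain (the natural parameter space) is convex. I expect the main obstacle to be exactly this careful treatment of the $+\infty$ values, together with justifying that integrating the pointwise inequality is legitimate; the convexity itself is immediate once the integrand is recognized as $\exp$ of a convex function, so no differentiation under the integral or delicate interchange of limits is required.
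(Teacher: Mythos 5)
Your proposal is correct and takes essentially the same route as the paper's proof: convexity of the log-partition function $A$ via H\"older's inequality, then the composition rule (increasing convex $\exp$ applied to the convex function $A(\theta)-\theta^T T(x)$), and finally integration of the pointwise convexity inequality against the nonnegative $\theta$-independent weight. The only difference is that you spell out the extended-value ($+\infty$) bookkeeping and the legitimacy of integrating the pointwise inequality, details the paper leaves implicit.
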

\begin{proof}
We first show $A(\theta)$ is convex.
By H\"older's inequality, we have
\begin{align*}
\exp(A(\eta\theta_1+(1-\eta)\theta_2))&
=
\int \exp((\eta\theta_1+(1-\eta)\theta_2)^T T(x))h(x)\;dx\\
&\le
\left(\int \exp(\theta_1^T T(x))h(x)\;dx\right)^\eta
\left(\int \exp(\theta_2^T T(x))h(x)\;dx\right)^{1-\eta},
\end{align*}
and by taking the log on both sides we get
\[\\
A(\eta\theta_1+(1-\eta)\theta_2)
\le
\eta A(\theta_1)
+(1-\eta) A(\theta_2).
\]

Since $\exp(\cdot)$ is an increasing convex function and $A(\theta)-\theta^TT(x)$ is convex in $\theta$,
the composition $\exp(A(\theta)-\theta^T T(x))$ is convex in $\theta$.
Finally, $V(\theta)$ is convex as it is
an integral of the convex functions
$\exp(A(\theta)-\theta^T T(x))h(x)$ over $x$;
see
\cite[\S B.2]{hiriart2001},
\cite[\S5]{rockafellar1970},
or \cite[\S3.2]{boyd2004}.
\end{proof}
We note in passing that $\log V(\theta)$ is also a convex
function of $\theta$, which is
a stronger statement than convexity of $V(\theta)$.
This fact, however, is not useful for us
since we do not have a simple way to obtain a 
stochastic gradient for $\log V(\theta)$,
whereas, as we will see later, we do for $V(\theta)$.

As we will see soon,
stochastic gradient descent hinges on evaluating the derivative of $V$ under the integral.
The following lemma is a consequence of Theorem 2.7.1 of \cite{lehmann2005}.
\begin{lm}
\label{lm-diff}
$V$ is differentiable and its gradient can be evaluated
under the integral on
$\intr\left\{\theta \,\mid\,V(\theta)<\infty\right\}$,
where $\intr$ denotes the interior.
\end{lm}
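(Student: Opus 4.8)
The plan is to reduce the statement to the classical fact---Lehmann's Theorem 2.7.1---that the normalizing integral of an exponential family may be differentiated under the integral sign at any interior point of its natural parameter space. To expose this structure, first factor the deterministic normalizer out of the variance integral and write
\[
V(\theta) + I^2 = e^{A(\theta)}\,\psi(\theta), \qquad \psi(\theta) = \int g(x)\, e^{-\theta^T T(x)}\; dx,
\]
where $g(x) = \phi^2(x) f^2(x) h(x) \ge 0$. Now $\psi$ is exactly the normalizing integral of an exponential family with natural parameter $-\theta$ and base measure $g(x)\,dx$, while $e^{A(\theta)}$ is the normalizing integral of the original family with parameter $\theta$ and base measure $h(x)\,dx$. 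Thus $V$ is, up to the additive constant $I^2$, a product of two exponential-family normalizers.

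Second, I would locate the point of differentiation correctly in both parameter spaces. Since $e^{A(\theta)}$ is a finite positive factor whenever $A(\theta)<\infty$, we have $\{\theta \mid V(\theta)<\infty\} = \{\theta \mid A(\theta)<\infty\}\cap\{\theta \mid \psi(\theta)<\infty\}$, and both sets are convex: the Hölder argument used to prove convexity of $A$ in the theorem above applies verbatim to show $\log\psi$ is convex, so $\{\psi<\infty\}$ is convex as well. Because $\intr(S_1\cap S_2)\subseteq \intr S_1\cap \intr S_2$ in general, every $\theta\in\intr\{V<\infty\}$ is simultaneously an interior point of the domain of $A$ and of the domain of $\psi$, which is precisely the hypothesis needed to invoke the cited theorem.

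Third, apply Theorem 2.7.1 separately to the two integrals. It yields that at such $\theta$ both $e^{A(\cdot)}$ and $\psi$ are differentiable with gradients obtained by passing the derivative through the integral, namely $\nabla e^{A(\theta)} = \int T(x)h(x)e^{\theta^T T(x)}\,dx$ and $\nabla\psi(\theta) = -\int T(x)g(x)e^{-\theta^T T(x)}\,dx$, both integrals converging absolutely. Combining the two factors by the product rule shows $V$ is differentiable, and a short computation verifies that the resulting expression coincides with $\int \nabla_\theta\bigl[g(x)\,e^{A(\theta)-\theta^T T(x)}\bigr]\,dx$; that is, the gradient of $V$ is obtained by differentiating the original integrand under the integral sign, which is the claim.

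The main obstacle is conceptual rather than computational: recognizing that the $\theta$-dependence sitting both inside the exponent $-\theta^T T(x)$ and in the prefactor $e^{A(\theta)}$ decouples into two genuine exponential-family integrals, so that the hard analytic content---the uniform integrability bounds that justify moving the derivative inside---is delivered wholesale by the cited theorem and need not be re-derived. After that reduction, the only real care required is the domain bookkeeping of the second step and checking that the product-rule gradient matches the differentiate-under-the-integral formula.
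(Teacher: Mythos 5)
Your proposal is correct and takes essentially the same route as the paper: the paper's entire proof is the citation of Theorem 2.7.1 of Lehmann, and your factorization $V(\theta)+I^2=e^{A(\theta)}\psi(\theta)$, with that theorem applied separately to the two exponential-family normalizers and the results combined via the product rule, is exactly the reduction this citation implicitly relies on (Lehmann's theorem does not apply to the integrand of $V$ directly because of the $\theta$-dependence in the prefactor $e^{A(\theta)}$, which is the point your decomposition handles). One inherited quibble: since $1/f_\theta(x)=e^{A(\theta)-\theta^TT(x)}/h(x)$, the fixed function should be $g=\phi^2f^2/h$ rather than $\phi^2f^2h$ --- a typo already present in the paper's Section~\ref{s-variance} display --- but this changes nothing in your argument.
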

In particular, we have
\BEAS
\nabla V(\theta)&=&
\int\nabla_\theta\frac{\phi^2(x)f^2(x)}{f_\theta(x)}\;dx\\
&=&
\int(\nabla A(\theta)-T(x))\frac{\phi^2(x)f^2(x)}{f_\theta^2(x)}f_\theta(x)\;dx\\
&=&
\expec_{X\sim f_\theta}
\left[
(\nabla A(\theta)-T(X))\frac{\phi^2(X)f^2(X)}{f_\theta^2(X)}
\right].
\EEAS
So when we take a sample $X\sim f_{\theta}$,
the random vector
\[
g=(\nabla A(\theta)-T(X))\frac{\phi^2(X)f^2(X)}{f_\theta^2(X)}
\]
satisfies $\expec g=\nabla V(\theta)$.

\section{The method}
\label{s-method}
\emph{Stochastic gradient descent} is a standard method for solving
\[
\begin{array}{ll}
\mbox{minimize}&V(\theta)\\
\mbox{subject to}&\theta\in \Theta,
\end{array}
\]
using the algorithm
\[
\theta_{n+1}=\Pi(\theta_n-\alpha_ng_n),
\]
where $\Pi$ is (Euclidean) projection onto $\Theta$,
the \emph{step size} $\alpha_n>0$ is an appropriately chosen sequence,
and the \emph{stochastic gradient} $g_n$ is a random variable satisfying
\[
\expec
\left[
g_n\mid\theta_n
\right]
=\nabla V(\theta_n).
\]
The intuition is that $-g_n$, although noisy, generally points towards
a descent direction of $V$ at $\theta_n$, and therefore each step
reduces the function value of $V$ in expectation
\cite{robbins1951, shor1985, polyak1987, kushner2003}.

Our algorithm, which we call \textsc{Convex AdaMC}, is
\BEAS
X_n&\sim& f_{\theta_n}\\
\hat{I}^\mathrm{AMC}_n&=&\frac{1}{n}\sum^n_{i=1}\frac{\phi(X_i)f(X_i)}{f_{\theta_i}(X_i)}\\
g_n&=&(\nabla A(\theta_n)-T(X_n))\frac{\phi^2(X_n)f^2(X_n)}{f_{\theta_n}^2(X_n)}\\
\theta_{n+1}&=&\Pi\left(\theta_n-\frac{C}{\sqrt{n}}g_n\right),
\EEAS
where $C>0$ and $\theta_1\in \Theta$.
As mentioned in the introduction,
the estimator $\hat{I}^\mathrm{AMC}_n$ is unbiased and has variance given by
\eqref{e-asymp-var} under a technical condition to be presented in \S\ref{s-analysis}.

We can view \textsc{Convex AdaMC}
as an adaptive importance sampling method
where the third and fourth line of the algorithm updates the sampling distribution.
Alternatively, we can view \textsc{Convex AdaMC} as
stochastic gradient descent on the convex function $V$
with an additional step, the second line of the algorithm,
that accumulates the estimate of $I$
but does not otherwise affect the iteration.

The computational cost of \textsc{Convex AdaMC} is cheap, of course,
if all of its operations are cheap.
This is the case if
$\phi$ and $f$ are functions we can easily evaluate,
if our family of distributions, $\mathcal{F}$, is one of the well-known exponential families,
and if $\Theta$ is a set we can easily project onto.

\section{Analysis}
\label{s-analysis}
Before we present our convergence results, we discuss the choice of $\Theta$.
For any convex domain $\Theta$, our variance function $V$ is convex
and minimizing $V(\theta)$ over $\theta\in\Theta$ is a mathematically
well-defined problem.
However, for our method to be well-defined and for the proof
of convergence to work out, we need further restrictions on $\Theta$.

Define
\[
K(\theta)=\expec_{X\sim f_\theta}
\left[
\frac{\phi^4(X)f^4(X)}{f_\theta^4(X)}
\right]
=\int 
\frac{\phi^4(x)f^4(x)}{f_\theta^3(x)}\;dx.
\]
We require that $\Theta$ is convex and compact and that
$\Theta\subseteq \intr\left\{\theta\mid K(\theta)<\infty\right\}$.
In other words, $\Theta$ must be a convex compact subset of
the interior of the set of
natural parameters for which
their importance sampled estimates
have finite 4th moment.
Since
\[
\left\{\theta\mid K(\theta)<\infty\right\}
\subseteq
\left\{\theta\mid V(\theta)<\infty\right\}
\subseteq
\left\{\theta\mid A(\theta)<\infty\right\}
\]
it follows that any $\theta\in \Theta$
defines a sampling distribution $f_\theta$
that produces an importance sampled estimate of finite variance.

\begin{thm}
\label{thm-main}
Assume 
$\Theta\subseteq \intr\left\{\theta\mid K(\theta)<\infty\right\}$ is nonempty, convex, and compact.
Define
$D=\max_{\theta_1,\theta_2\in \Theta}\|\theta_1-\theta_2\|_2$
and
\[
G^2= \sup_{\theta\in \Theta}\expec_{X\sim f_\theta}
\left\|(\nabla A(\theta)-T(X))\frac{\phi^2(X)f^2(X)}{f_\theta^2(X)}\right\|_2^2.
\]
Then $G<\infty$, and $\hat{I}^\mathrm{AMC}_n$, the unbiased estimator of \textsc{Convex AdaMC},
satisfies
\[
\frac{1}{n}V^*\le
\var(\hat{I}^\mathrm{AMC}_n)\le \frac{1}{n}V^*+\left(\frac{D^2}{2C}+CG^2\right)\frac{1}{n^{3/2}}.
\]
\end{thm}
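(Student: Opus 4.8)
The plan is to regard \textsc{Convex AdaMC} as projected stochastic gradient descent on the convex function $V$ and to combine the standard SGD suboptimality estimate with the variance identity $\var(\hat I^\mathrm{AMC}_n)=\frac{1}{n^2}\sum_{i=1}^n\expec V(\theta_i)$ already recorded for adaptive estimators. Granting that identity, the lower bound is immediate: each iterate satisfies $\theta_i\in\Theta$, hence $V(\theta_i)\ge V^\star$ and $\var(\hat I^\mathrm{AMC}_n)\ge\frac{1}{n^2}\cdot nV^\star=\frac{1}{n}V^\star$. For the upper bound it suffices to show $\sum_{i=1}^n\expec[V(\theta_i)-V^\star]\le(\frac{D^2}{2C}+CG^2)\sqrt n$ and then divide by $n^2$.

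Before any of this is meaningful I must verify $G<\infty$, and this is the step I expect to be the main obstacle. The difficulty is that $\|g\|_2^2=\|\nabla A(\theta)-T(X)\|_2^2\,\phi^4(X)f^4(X)/f_\theta^4(X)$ carries an extra factor $\|T(X)\|_2^2$ beyond the fourth-moment integrand whose integral is $K(\theta)$, yet we have assumed only $K<\infty$. The remedy is to use that $\Theta$ lies in the \emph{interior} of $\{\theta\mid K(\theta)<\infty\}$: by compactness there is $\delta>0$ such that the closed $\delta$-neighborhood $\Theta'$ still lies in that interior. After the split $\|\nabla A(\theta)-T(X)\|_2^2\le 2\|\nabla A(\theta)\|_2^2+2\|T(X)\|_2^2$, the first piece contributes a bounded multiple of $K(\theta)$, and for the second I would use the elementary domination $t^2\le C_\varepsilon e^{\varepsilon t}$ together with a bound of $e^{\varepsilon\|T(x)\|_2}$ by finitely many exponentials $e^{s^T T(x)}$ having $\|s\|_2$ proportional to $\varepsilon$. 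Completing the exponent shows $\int e^{s^T T}\phi^4f^4/f_\theta^3\,dx=e^{3(A(\theta)-A(\theta-s/3))}K(\theta-s/3)$, and for $\varepsilon$ small enough $\theta-s/3\in\Theta'$. Because $K$ is convex by the same composition-and-integration argument used for $V$, it is continuous and therefore bounded on the compact set $\Theta'$; $A$ and $\nabla A$ are likewise bounded on the relevant compact sets. Taking the supremum over $\theta\in\Theta$ then yields $G<\infty$.

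With $G<\infty$ secured, the SGD analysis is routine. Let $\theta^\star\in\Theta$ be a minimizer of $V$, which exists since $V$ is convex and finite, hence continuous, on a neighborhood of the compact set $\Theta$. Writing $d_n=\expec\|\theta_n-\theta^\star\|_2^2$, I would expand $\|\theta_{n+1}-\theta^\star\|_2^2$, invoke nonexpansiveness of the Euclidean projection $\Pi$ onto $\Theta$ (legitimate because $\theta^\star\in\Theta$), take conditional expectations using $\expec[g_n\mid\theta_n]=\nabla V(\theta_n)$ and $\expec[\|g_n\|_2^2\mid\theta_n]\le G^2$, and apply the subgradient inequality $\nabla V(\theta_n)^T(\theta_n-\theta^\star)\ge V(\theta_n)-V^\star$. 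The outcome is the per-step bound $\expec[V(\theta_n)-V^\star]\le\frac{1}{2\alpha_n}(d_n-d_{n+1})+\frac{\alpha_n}{2}G^2$ with $\alpha_n=C/\sqrt n$.

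Finally I would sum over $n=1,\dots,N$. The first term is handled by Abel summation: since $d_n\le D^2$ and $1/\alpha_n$ is increasing, $\sum_{n=1}^N\frac{1}{2\alpha_n}(d_n-d_{n+1})\le\frac{D^2}{2\alpha_N}=\frac{D^2\sqrt N}{2C}$, while $\frac{G^2}{2}\sum_{n=1}^N\alpha_n=\frac{CG^2}{2}\sum_{n=1}^N n^{-1/2}\le CG^2\sqrt N$ using $\sum_{n=1}^N n^{-1/2}\le 2\sqrt N$. Hence $\sum_{n=1}^N\expec[V(\theta_n)-V^\star]\le(\frac{D^2}{2C}+CG^2)\sqrt N$, and dividing by $N^2$ gives $\var(\hat I^\mathrm{AMC}_N)\le\frac{1}{N}V^\star+(\frac{D^2}{2C}+CG^2)N^{-3/2}$. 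The only genuinely delicate ingredient is the finiteness of $G$; everything afterward is the standard projected-SGD rate specialized to the step size $C/\sqrt n$.
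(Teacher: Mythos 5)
Your proof is correct, and on the heart of the theorem---the finiteness of $G$---you take a genuinely different route from the paper. The paper handles $G<\infty$ by differentiating $K$ under the integral sign (via a lemma quoted from Lehmann's book): it expresses $\int T_i(x)\,\phi^4f^4/f_\theta^3\,dx$ and $\int T_i^2(x)\,\phi^4f^4/f_\theta^3\,dx$ through first and second partial derivatives of $K$ and $A$, concludes these are continuous on $\intr\{\theta\mid K(\theta)<\infty\}$, and then invokes compactness of $\Theta$. You instead dominate the troublesome factor $\|T(x)\|_2^2$ by $C_\varepsilon e^{\varepsilon\|T(x)\|_2}$, bound that by finitely many tilted exponentials $e^{s^TT(x)}$ with $\|s\|_2$ proportional to $\varepsilon$, and absorb each tilt via the exponential-shift identity $\int e^{s^TT}\phi^4f^4/f_\theta^3\,dx = e^{3(A(\theta)-A(\theta-s/3))}K(\theta-s/3)$, exploiting the hypothesis that $\Theta$ lies in the \emph{interior} of $\{K<\infty\}$ (so a closed $\delta$-fattening $\Theta'$ stays inside for $\varepsilon$ small) together with the fact that $K$, being convex by the same composition-and-integration argument as $V$, is continuous and hence bounded on the compact set $\Theta'$. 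Your route is more elementary and self-contained---it needs no differentiation-under-the-integral lemma, only ``convex and finite on an open set implies continuous''---and it makes explicit why the interior assumption appears in the hypotheses; the paper's route is shorter once the quoted lemma is granted and yields smoothness of $K$ as a byproduct. Everything else coincides with the paper's argument: the lower bound from $V(\theta_i)\ge V^\star$, the per-step projected-SGD inequality from nonexpansivity, conditional unbiasedness of $g_i$, and convexity of $V$, and your Abel summation is exactly the paper's ``almost telescoping'' sum combined with $\sum_{i=1}^n i^{-1/2}\le 2\sqrt n$. The one ingredient you assume rather than prove is the decomposition $\var(\hat I^{\mathrm{AMC}}_n)=\frac{1}{n^2}\sum_{i=1}^n\expec V(\theta_i)$: the paper re-derives it inside the proof by conditioning on $\theta_j$ to annihilate the cross terms (using that $X_i$ is independent of the entire past given $\theta_i$), and a complete writeup should include that short computation rather than cite the introduction's unproved formula for adaptive importance sampling.
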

\begin{proof}
We defer the proof of $G<\infty$ to the appendix.

Since the conditional dependency of our sequences $\theta_1,\theta_2,\ldots$ and $X_1,X_2,\ldots$ is
\[
\begin{tikzcd}
\theta_1 \arrow{d}\arrow{r} & \theta_2 \arrow{d}\arrow{r} & \theta_3 \arrow{d}\arrow{r} &\theta_4\\
X_1 \arrow{ur} & X_2 \arrow{ur} & X_3 \arrow{ur} &
\end{tikzcd}
\cdots
\]
$X_i$ is independent of the entire past conditioned on $\theta_i$ for all $i$.
With this insight, we have
\BEAS
\expec \hat{I}^\mathrm{AMC}_n&=&\frac{1}{n}\sum^n_{i=1}\expec\frac{\phi(X_i)f(X_i)}{f_{\theta_i}(X_i)}
=\frac{1}{n}\sum^n_{i=1}\expec\left[\expec\left[\frac{\phi(X_i)f(X_i)}{f_{\theta_i}(X_i)}\mid \theta_i\right]\right]
=\frac{1}{n}\sum^n_{i=1}\expec\left[I\right]=I
\EEAS
and 
\BEAS
\var (\hat{I}^\mathrm{AMC}_n)
&=&
\expec (\hat{I}^\mathrm{AMC}_n-I)^2\\
&=&\frac{1}{n^2}
\sum^n_{i=1}
\expec \left(\frac{\phi(X_i)f(X_i)}{f_{\theta_i}(X_i)}-I\right)^2\\
&&+
\frac{2}{n^2}\sum_{1\le i<j\le n}
\expec \left(\frac{\phi(X_i)f(X_i)}{f_{\theta_i}(X_i)}-I\right)
\left(\frac{\phi(X_j)f(X_j)}{f_{\theta_j}(X_j)}-I\right)
\\
&=&
\frac{1}{n^2}
\sum^n_{i=1}\expec\left[
\expec\left[ \left(\frac{\phi(X_i)f(X_i)}{f_{\theta_i}(X_i)}-I\right)^2\mid\theta_i\right]\right]\\
&&+
\frac{2}{n^2}\sum_{1\le i<j\le n}
\expec\left[\expec\left[ \left(\frac{\phi(X_i)f(X_i)}{f_{\theta_i}(X_i)}-I\right)
\left(\frac{\phi(X_j)f(X_j)}{f_{\theta_j}(X_j)}-I\right)
\mid\theta_j\right]
\right]\\
&=&
\frac{1}{n^2}
\sum^n_{i=1}\expec V(\theta_i).
\EEAS
Since $V(\theta_i)\ge V^\star$
for any $\theta_i\in \Theta$, 
we conclude $\var (\hat{I}^\mathrm{AMC}_n)\ge V^\star/n$.\

Now let's prove the upper bound.
Let $\theta_\star$ be a minimizer of $V$ over $\Theta$ (which exists since $V$ is continuous
on the compact set $\Theta$). Then we have
\BEAS
\|\theta_{i+1}-\theta_\star\|^2_2&=& 
\|\Pi(\theta_i-C/\sqrt{i} g_i)-\Pi(\theta_\star)\|_2^2\nonumber\\
&\le&\|\theta_i-C/\sqrt{i} g_i-\theta_\star\|^2_2\\
&= &\|\theta_i-\theta_\star\|^2_2+\frac{C^2}{i}\|g_i\|^2_2-2\frac{C}{\sqrt{i}}g_i^T(\theta_i-\theta_\star),
\EEAS
where the first inequality follows from nonexpansivity of $\Pi$
(\ie, $\|\Pi(u)-\Pi(v)\|_2 \leq \|u-v\|_2$ for any $u$ and $v$).
We take expectation conditioned on $\theta_i$ on both sides to get
\BEAS
\expec\left[\|\theta_{i+1}-\theta_\star\|^2_2\mid \theta_i\right]
&\le &\|\theta_i-\theta_\star\|^2_2+\frac{C^2}{i}
\expec\left[\|g_i\|^2_2\mid \theta_i\right]-2\frac{C}{\sqrt{i}}
\nabla V(\theta_i)^T(\theta_i-\theta_\star)\\
&\le &\|\theta_i-\theta_\star\|^2_2+\frac{C^2}{i}G^2-2\frac{C}{\sqrt{i}}
\nabla V(\theta_i)^T(\theta_i-\theta_\star)\\
&\le &\|\theta_i-\theta_\star\|^2_2+\frac{C^2}{i}G^2_2-2\frac{C}{\sqrt{i}}(V(\theta_i)-V(\theta_\star)),
\EEAS
where the second inequality follows from the definition of $G$
and the third inequality follows from 
re-arranging the following consequence of $V$'s convexity
\[
V(\theta_\star)\ge V(\theta_i)+\nabla V(\theta_i)^T(\theta_\star-\theta_i).
\]
We take the full expectation on both sides and re-arrange to get
\BEAS
\expec V(\theta_i)-V^\star&\le& \frac{\sqrt{i}}{2C}(\expec\|\theta_i-\theta_\star\|_2^2-\expec\|\theta_{i+1}-\theta_\star\|_2^2)+\frac{C}{2\sqrt{i}}G^2.
\label{e-e5}
\EEAS
We take a summation to get an ``almost telescoping'' series:
\BEAS
2\sum^n_{i=1}(\expec V(\theta_i)-V^\star)&\le&
\frac{1}{C}\sum^n_{i=1}(\sqrt{i}-\sqrt{i-1})\expec\|\theta_i-\theta_\star\|_2^2+CG^2\sum^n_{i=1}\frac{1}{\sqrt{i}}\\
&\le &
\frac{D^2}{C}\sum^n_{i=1}(\sqrt{i}-\sqrt{i-1})+CG^2\sum^n_{i=1}\frac{1}{\sqrt{i}}\\
&\le &
\frac{D^2}{C}\sqrt{n}+2CG^2\sqrt{n},
\EEAS
where the second inequality follows from the definition of $D$
and the third inequality follows from
\[
\sum^n_{i=1}\frac{1}{\sqrt{i}}\le \int^n_{0}\frac{1}{\sqrt{i}}\;di.
\]
Finally, we divide both sides by $2n^2$ to get
\[
\frac{1}{n^2}\sum^n_{i=1}\expec V(\theta_i)\le
\frac{1}{n}V^\star
+
\left(\frac{D^2}{2C}+CG^2\right)\frac{1}{n^{3/2}}.
\]
\end{proof}
Not surprisingly, we have a central limit theorem (CLT) for our estimator.
The proof of the following theorem is a straightforward
application of a Martingale CLT, and is given in the appendix.
\begin{thm}
\label{thm-clt}
Under the assumptions of Theorem~\ref{thm-main}, we have
\[
\sqrt{n}(\hat{I}^\mathrm{AMC}_n-I)
\stackrel{\mathcal{D}}{\rightarrow}
\mathcal{N}(0,V^\star).
\]
\end{thm}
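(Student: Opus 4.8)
The plan is to recognize that $\sqrt{n}(\hat{I}^\mathrm{AMC}_n-I)$ is the terminal value of a normalized martingale array and to invoke a standard martingale central limit theorem (as in Hall and Heyde). First I would introduce the filtration $\mathcal{G}_i=\sigma(X_1,\ldots,X_i)$ and the martingale differences
\[
Y_i=\frac{\phi(X_i)f(X_i)}{f_{\theta_i}(X_i)}-I,
\]
so that $\sqrt{n}(\hat{I}^\mathrm{AMC}_n-I)=\frac{1}{\sqrt{n}}\sum^n_{i=1}Y_i$. Since $\theta_i$ is a function of $X_1,\ldots,X_{i-1}$ it is $\mathcal{G}_{i-1}$-measurable, and by the conditional dependency structure exhibited in the proof of Theorem~\ref{thm-main} the conditional law of $X_i$ given $\mathcal{G}_{i-1}$ is exactly $f_{\theta_i}$. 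Unbiasedness of importance sampling then gives $\expec[Y_i\mid\mathcal{G}_{i-1}]=\expec_{X_i\sim f_{\theta_i}}[\phi(X_i)f(X_i)/f_{\theta_i}(X_i)]-I=0$, so $\{Y_i\}$ is a martingale difference sequence and $Z_{n,i}=Y_i/\sqrt{n}$ is a mean-zero, square-integrable martingale array.

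Next I would check the two hypotheses of the martingale CLT. The conditional variances are $\expec[Z_{n,i}^2\mid\mathcal{G}_{i-1}]=\frac{1}{n}\var_{X_i\sim f_{\theta_i}}[\phi(X_i)f(X_i)/f_{\theta_i}(X_i)]=\frac{1}{n}V(\theta_i)$, so the first (convergence of conditional variance) hypothesis requires
\[
\frac{1}{n}\sum^n_{i=1}V(\theta_i)\to V^\star\quad\text{in probability}.
\]
This is precisely where Theorem~\ref{thm-main} is used: the variance computation there shows $\expec\left[\frac{1}{n}\sum^n_{i=1}V(\theta_i)\right]\to V^\star$, while $V(\theta_i)\ge V^\star$ almost surely for every $i$. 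Hence $\frac{1}{n}\sum^n_{i=1}(V(\theta_i)-V^\star)$ is nonnegative with vanishing mean, and Markov's inequality upgrades this to convergence in probability to $0$.

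For the conditional Lindeberg condition I would invoke the fourth-moment assumption $\Theta\subseteq\intr\{\theta\mid K(\theta)<\infty\}$. Because $K$ is continuous and $\Theta$ is compact, $K_{\max}=\sup_{\theta\in\Theta}K(\theta)<\infty$, by the same argument that yields $G<\infty$ in the appendix. This bounds $\expec[Y_i^4\mid\mathcal{G}_{i-1}]$ uniformly in $i$ by some constant $M$, so the elementary bound $Z_{n,i}^2\mathbf{1}(|Z_{n,i}|>\epsilon)\le Z_{n,i}^4/\epsilon^2$ gives
\[
\sum^n_{i=1}\expec\left[Z_{n,i}^2\mathbf{1}(|Z_{n,i}|>\epsilon)\mid\mathcal{G}_{i-1}\right]\le\frac{1}{\epsilon^2 n^2}\sum^n_{i=1}\expec[Y_i^4\mid\mathcal{G}_{i-1}]\le\frac{M}{\epsilon^2 n}\to 0
\]
deterministically, hence in probability. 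With both hypotheses verified, the martingale CLT yields $\sqrt{n}(\hat{I}^\mathrm{AMC}_n-I)\stackrel{\mathcal{D}}{\rightarrow}\mathcal{N}(0,V^\star)$.

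I expect the main obstacle to be the first hypothesis, the in-probability convergence of the \emph{random} conditional variance $\frac{1}{n}\sum_i V(\theta_i)$ to the \emph{constant} $V^\star$; the whole argument hinges on upgrading the expectation bound of Theorem~\ref{thm-main} to convergence in probability, which succeeds only because the almost-sure lower bound $V(\theta_i)\ge V^\star$ pins the average from below and forces it to concentrate. The Lindeberg verification is routine once the uniform fourth-moment bound $K_{\max}<\infty$ is secured, which I would justify (or simply import from the appendix) exactly as in the proof that $G<\infty$.
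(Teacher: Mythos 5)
Your proposal is correct and follows essentially the same route as the paper: the same martingale-difference array with the same filtration, the same use of Theorem~\ref{thm-main} plus the almost-sure bound $V(\theta_i)\ge V^\star$ to get in-probability convergence of $\frac{1}{n}\sum_i V(\theta_i)$ to $V^\star$, and the same fourth-moment bound $\sup_{\theta\in\Theta}K(\theta)<\infty$ to verify the Lindeberg condition before invoking a martingale CLT. The only differences are cosmetic (Hall--Heyde's conditional Lindeberg form versus Billingsley's Theorem 35.12, and your slightly more careful handling of the $-I$ term inside the fourth moment).
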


\textsc{Convex AdaMC} has parameters $C$ and $\theta_1$ that must be chosen,
but Theorem~\ref{thm-main} or its proof does not give us insight on how to make this choice.
Of course, the choice $C=D/\sqrt{2}G$ optimizes the bound of Theorem~\ref{thm-main},
but this is not very meaningful:
The quantity $G$, in general, is unknown a priori,
and the term $(D^2/2C+CG^2)/n^{3/2}$ is merely a bound that we suspect
is not representative of the actual performance.
In practice, one should vary $C$ through several informal iterations
to find what works well.

Likewise, the stated bound of Theorem~\ref{thm-main} independent of $\theta_1$,
and the proof does not seem to reveal any significant dependence on $\theta_1$.
However, intuition and empirical experiments suggest that a $\theta_1$ with
a small value of $V(\theta_1)$ performs well.

Rather, the theoretical significance of 
Theorem~\ref{thm-main} and \ref{thm-clt}
is that the leading order term of $\var(\hat{I}^\mathrm{AMC}_n)$ is $V^\star/n$,
the optimum among the family $\mathcal{F}$,
and that the following term is of order $\mathcal{O}(1/n^{3/2})$.
In particular, this implies that \textsc{Convex AdaMC} cannot be trapped 
at a (non-optimal) local minimum.

\section{Examples}
\label{s-examples}
\paragraph{Volume of a polytope.}
Consider the problem of computing the area of the quadrilateral $Q$
with corners at $(0.05,0.9)$, $(0.8,0.9)$, $(1,0.7)$, and $(0.15,0.7)$.
The answer is $0.16$, which of course can be found with simple geometry.

First note that
\[
I=\int^1_0\int^1_01_{Q}\;dxdy,
\]
where $1_{Q}$ is the indicator function that is $1$ within the quadrilateral
and $0$ otherwise. Now let's see how to compute $I$ with \textsc{Convex AdaMC}.

First, we choose bivariate Gaussians, which have the densities
\[
f(x;\mu,\Sigma)=
\frac{1}{2\pi|\Sigma|^{1/2}}
\exp
\left(-\frac{1}{2}(x-\mu)^T\Sigma^{-1}(x-\mu)\right),
\]
as our candidate sampling distributions.
To form these into an exponential family, we perform a change of variables.
Loosely speaking, we say our natural parameter $\theta$ has
two components: $m=\Sigma^{-1}\mu\in \reals^2$
and $S=\Sigma^{-1}\in \symm^2$ where $\symm^2$ denotes the set of $2\times 2$ symmetric matrices.
Now our densities are
\[
f_{m,S}(x)=\frac{1}{2\pi}
\exp\left(
m^Tx-\frac{1}{2}\tr(Sxx^T)
\right)
\exp\left(
-\frac{1}{2}
\left(
m^TS^{-1}m
-\log |S|
\right)
\right).
\]
(Note that $\tr(Sxx^T)$ is linear in $S$
as it is the inner product between $S$ and $xx^T$,
interpreted as vectors of $\reals^4$.)
We choose our compact natural parameter set $\Theta$ to be
\[
\Theta=
[0,25]^2
\times
\left\{S\in \symm^2\mid
I\preceq S \preceq 50I
\right\}.
\]
In other words, we restrict $m_1$ and $m_2$ to be within $[0,25]$
and the eigenvalues of $\Theta$ to both be within $[1,50]$.
With this choice, the updates of \textsc{Convex AdaMC} are
\BEAS
m_{n+1}&=&
\Pi_{[0,25]^2}\left(
m_n-\frac{C1_Q(X_n)}{f_{m,S}^2(X_n)\sqrt{n}}\left(S^{-1}_nm_n-X_n\right)\right),
\\
S_{n+1}&=&
\Pi_{\left\{S\in \symm^2\mid
I\preceq S \preceq 50I
\right\}}
\left(
S_n-\frac{C1_Q(X_n)}{2f_{m,S}^2(X_n)\sqrt{n}}
\left(
X_nX_n^T-S_n^{-1}m_nm_n^TS_n^{-1}-S_n^{-1}
\right)\right).
\EEAS

Figure~\ref{fig-examp} shows the improvement of the sampling distributions
for a particular run of this problem.
Figure~\ref{fig-examp-1} gives the initial sampling distribution.
Since the first sample to ever hit $Q$ happens at iteration $33$, 
the sampling distribution is identical for the first $32$ iterations.
As the algorithm progresses, we see that the density function of the Gaussian sampling distribution
gradually matches the shape $I_Q$.
\begin{figure}[h!]
\hskip -.35in
\begin{subfigure}{0.6\textwidth}
\includegraphics[width=\textwidth]{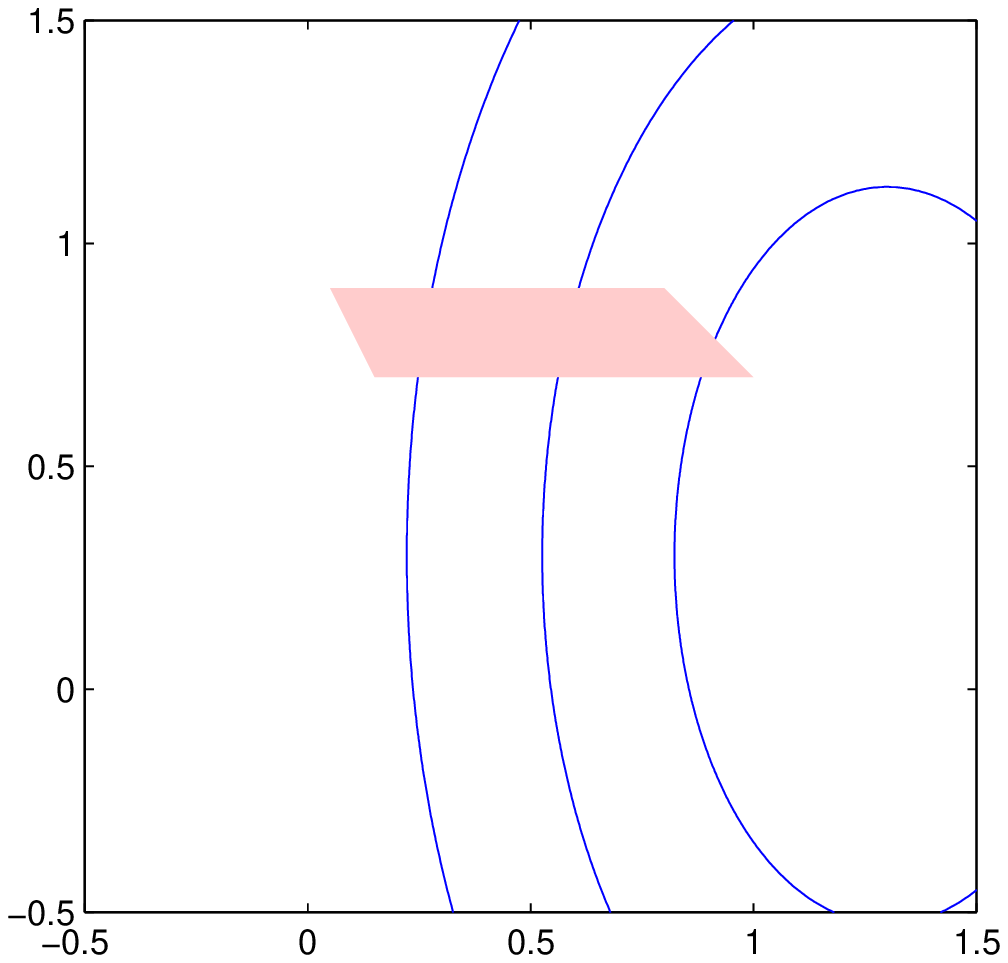}
\vskip -0.17in
\caption{Iterations $1$ through $32$}
\label{fig-examp-1}
\end{subfigure}
\hskip -.6in
\begin{subfigure}{0.6\textwidth}
\includegraphics[width=\textwidth]{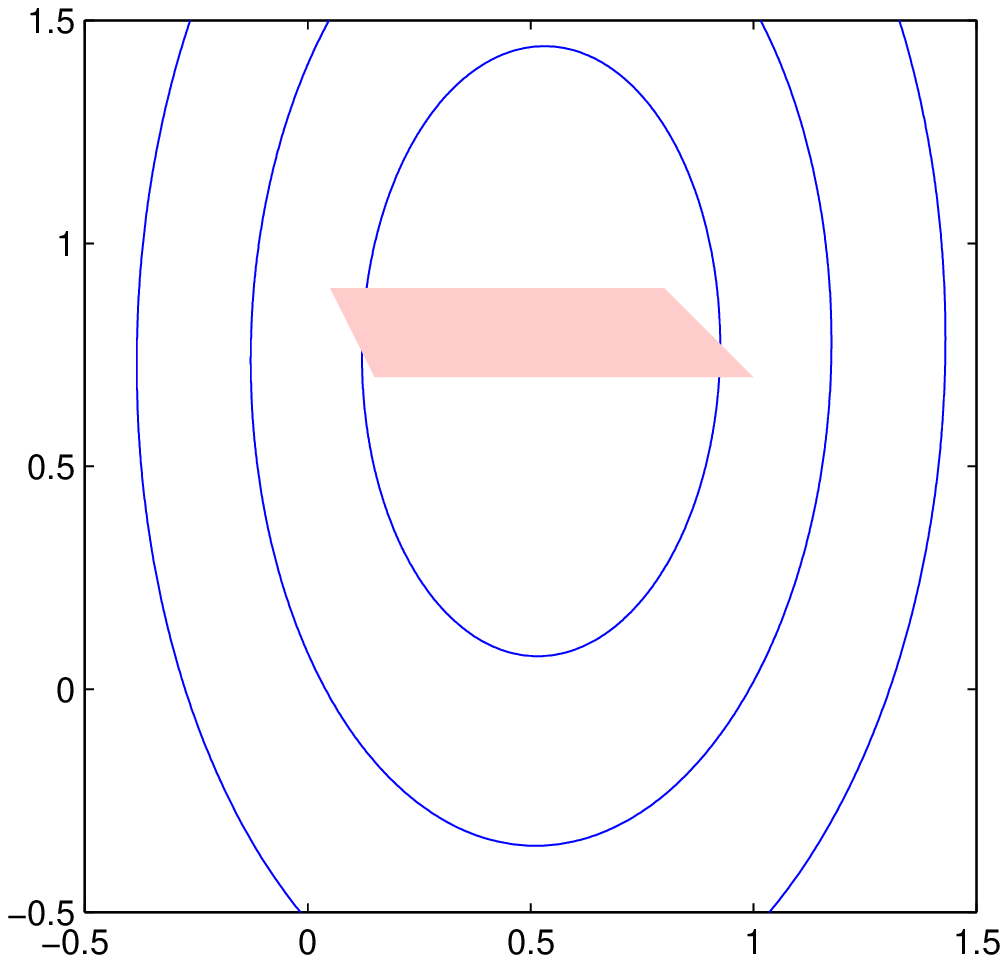}
\vskip -0.17in
\caption{Iteration $10^4$}
\end{subfigure}
\\

\hskip -.35in
\begin{subfigure}{0.6\textwidth}
\includegraphics[width=\textwidth]{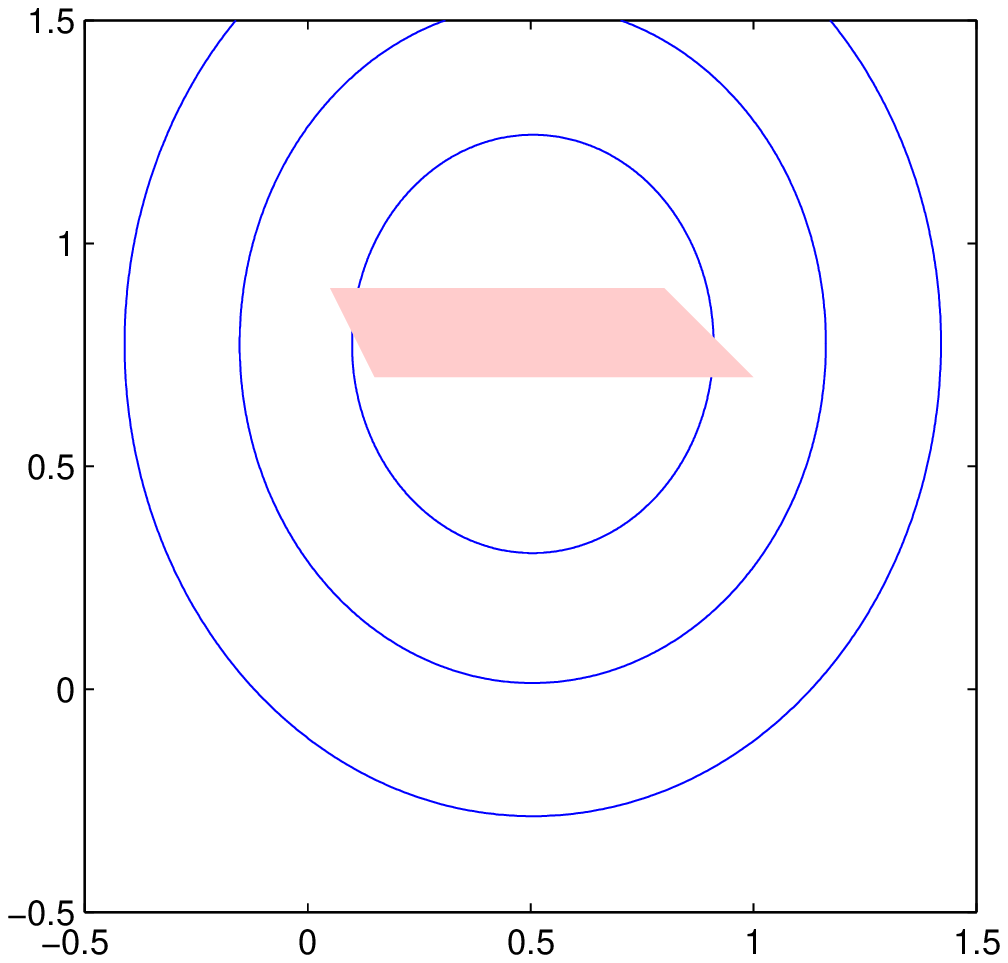}
\vskip -0.17in
\caption{Iteration $10^6$}
\end{subfigure}
\hskip -.6in
\begin{subfigure}{0.6\textwidth}
\includegraphics[width=\textwidth]{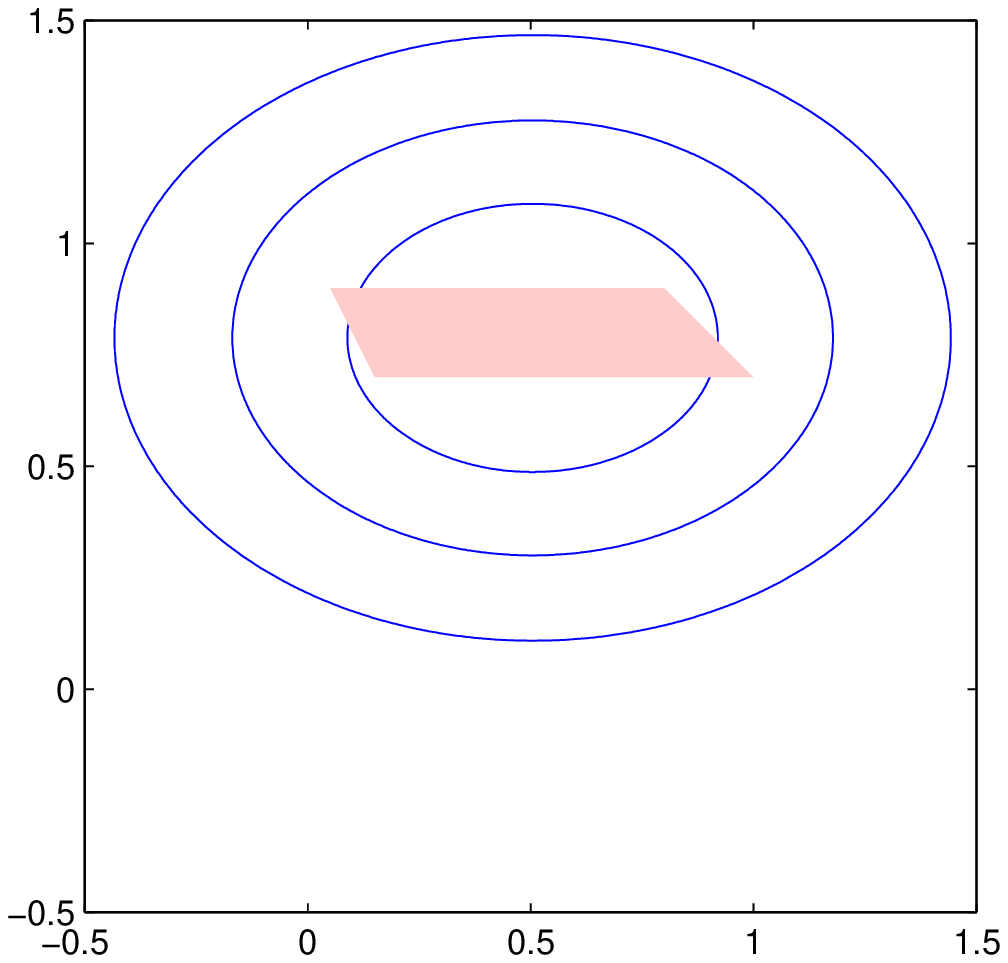}
\vskip -0.17in
\caption{Iteration $10^8$}
\end{subfigure}
\caption{Sampling distributions at different iterations.
The red quadrilateral represents $Q$ and 
the three ellipses denote the $68\%$, $95\%$, and $99.7\%$ confidence ellipsoids.}
\label{fig-examp}
\end{figure}

\paragraph{Option pricing.}
Consider the pricing of an arithmetic Asian call option
on an underlying asset under standard Black-Scholes assumptions \cite{glasserman1999}.
We write $S^0$ for the initial price of the underlying asset,
$r$ and $\sigma$ for the interest rate and volatility of the Black-Scholes model,
and $T$ for the maturity time.
Under the Black-Scholes model, the price of the asset at time $jT/k$ is
\[
S^j(X)=S^0\exp
\left[
(r-\frac{1}{2}\sigma^2)j\frac{T}{n}+\sigma\sqrt{\frac{T}{n}}\sum^j_{i=1}X^j
\right]
\]
for $t=1,\ldots,k$, where $X\in \reals^k$ is random with 
independent standard normal entries $X^1,\ldots,X^k$.
(Here we will use superscripts to denote entries of a vector.)
The discounted payoff of the option with strike $K$ is given by
\[
\phi(X)=\exp^{-rT}\max
\left\{
\frac{1}{k}\sum^k_{i=1}S^j(X)-K,0
\right\},
\]
and we wish to compute $\expec \phi(X)$.

To use \textsc{Convex AdaMC}, we choose the exponential family
\[
f_\theta(x)=\frac{1}{(2\pi)^{k/2}}e^{-\|x-\theta\|_2^2}
=\exp\left(\theta^Tx-\frac{1}{2}\|\theta\|_2^2\right)
\exp\left(-\frac{1}{2}\|x\|_2^2\right)/(2\pi)^{k/2}
,
\]
where $\theta\in \reals^k$ and $\Theta\in [-0.5,0.5]^k$.
In other words, $X\sim f_\theta$
contains independent standard normals with mean shifted by $\theta$.
So we have
\[
\phi(X)\frac{f(X)}{f_\theta(X)}
=\phi(X)\exp\left(\frac{1}{2}\|\theta\|_2^2-X^T\theta\right)
\]
and
\[
\nabla A(\theta)=\theta
\qquad
T(X)=X.
\]

We run \text{Convex AdaMC} with the parameters
$S^0=K=50$, $r=0.05$, $\sigma=0.3$,
$T=1.0$, $k=64$, $C=0.01$, and $\theta_1=0$
for $10^7$ iterations.
Figure~\ref{fig-asian} shows the shifting at the end of the algorithm
and the asset price estimate is $4.02$.
\begin{figure}[h!]
\vskip -0.15in
\hskip -0.5in
\begin{psfrags}
\psfrag{entries of theta}{\raisebox{-1ex}{\footnotesize Entries of $\theta$}}
\includegraphics[width=1.15\textwidth]{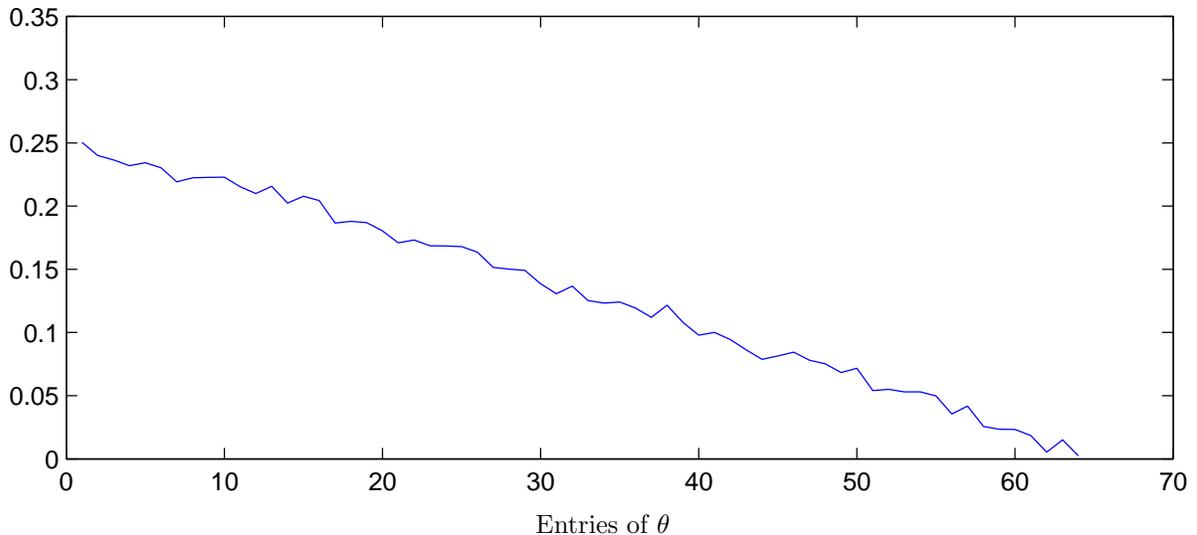}
\end{psfrags}
\vskip -0.05in
\caption{Importance sampling parameter $\theta$
for the Asian option pricing problem
after $10^7$ iterations.}
\label{fig-asian}
\end{figure}

\section{Remarks, extensions, and variations}
\label{s-conclusion}
An iteration of \textsc{Convex AdaMC} is simultaneously an iteration of a convex optimization problem
and an iteration of importance sampling.
Because of this fact, each iteration of the method is computationally efficient,
and we can prove convergence of the variance (and of course the estimator)
as a function of the iteration count.

Some previous work on adaptive importance sampling
have used stochastic gradient descent or similar stochastic approximation algorithms
without a setup to make the variance convex \cite{alqaq1995, arouna2003, arouna2004, egloff2010}.
While these methods are applicable to a more general class of candiate sampling distributions,
they have little theoretical guarantees on the variances of the estimators;
This is not surprising as in general with nonconvex optimization problems,
it is difficult to prove
anything beyond mere convergence to a stationary point,
such as a rate of convergence or convergence to the global optimum.

Other previous work on adaptive importance sampling
solves an optimization subproblem to update the sampling parameter each time,
either with an off-the-shelf deterministic optimization algorithm
or, especially in the case of the cross-entropy method, by focusing on special cases with analytic solutions
\cite{oh1992, devetsikiotis1993a, devetsikiotis1993b, lieber1997, rubinstein1997, rubinstein1999, deboer2004, deboer2005, pennanen2006, douc2007, cappe2008, corneut2012, he2014}.
While some these methods do exploit convexity to establish that the subproblems can be solved efficiently,
these subproblems and the storage requirement to represent these subproblems grow in size with the number of iterations.
One could loosely argue that the inefficiency is a consequence of separating the optimization and the importance sampling.


We point out two straightforward generalizations that we omitted for the sake of simplicity.
One is that when the nominal and importance distributions have densities with respect to
any general measure, not the Lebesgue measure as assumed in our exposition, the same results apply.

Another is to adaptively minimize the R\'{e}nyi generalized divergence with parameter $\alpha\ge 1$
of the sampling distribution to the 
``optimal'' sampling distribution $|\phi(x)|f(x)$ \cite{renyi1961, mcLeish2010}.
What we did, minimizing the variance of the estimator, is the special with $\alpha=2$.
When $\alpha=1$, the R\'{e}nyi generalized divergence becomes the cross entropy
and we get a method similar to the cross-entropy method \cite{rubinstein1999,deboer2005}.
(The R\'{e}nyi generalized divergence is convex for $\alpha\ge 1$.)

There are other not-so-straightforward generalizations worth pursuing.
One is to try other stochastic optimization methods.
In this paper, we used the most common and simplest stochastic optimization method,
stochastic gradient descent with step size $C/\sqrt{n}$.
However, there are many other methods to solve a given stochastic optimization problem,
and these other methods could perform better under certain assumptions.

Another would be a different weighting scheme.
In \textsc{Convex AdaMC},
we add a sequence of unbiased estimators with varying variance,
which, loosely speaking, is decreasing in expectation.
If we knew these variances in advance, then we can easily compute
the optimal weighting, which is not a uniform weighting.
Although we don't know the variances in advance,
it would be interesting to know if there is a better weighting
or to characterize the optimality of the uniform weighting
in the spirit of Theorem~4 of \cite{owen1999}.

Finally, it would be most interesting to understand
\textsc{Convex AdaMC}'s theoretical and empirical
performance when used in conjunction with
other variance reduction techniques such as control variates or
mixture importance sampling.

\subsection*{Acknowledgement}
We thank Art B. Owen for helpful discussions and many detailed suggestions.
This research was supported by the Simons Foundation and DARPA X-DATA.

\section{Appendix}
The following lemma, like Lemma~\ref{lm-diff}
follows from Theorem 2.7.1 of \cite{lehmann2005}.
\begin{lm}
\label{lm-k}
$A$, $V$, and $K$ are infinitely differentiable 
and all derivatives can be evaluated under their integrals
on the interiors of their respective domains.
\end{lm}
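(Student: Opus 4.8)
The plan is to reduce all three assertions to a single analyticity fact, which is the content of Theorem 2.7.1 of \cite{lehmann2005}: for a fixed $\sigma$-finite measure $\mu$ and the fixed statistic $T$, the generalized Laplace transform
\[
L_g(\eta)=\int g(x)\exp(\eta^T T(x))\;d\mu(x)
\]
is real-analytic, hence $C^\infty$, on the interior of its convergence region $\{\eta\mid L_{|g|}(\eta)<\infty\}$, and every partial derivative in $\eta$ may be taken under the integral sign. The whole argument then amounts to writing each of $\exp(A)$, $V$, and $K$ as such a transform (up to a smooth scalar prefactor) and reading off the conclusion.

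First I would treat $A$ directly. Since $\exp(A(\theta))=\int \exp(\theta^T T(x))h(x)\;dx=L_1(\theta)$ against $d\mu=h\,dx$, the cited theorem makes $\exp(A(\cdot))$ infinitely differentiable on $\intr\{\theta\mid A(\theta)<\infty\}$, with differentiation under the integral. Because $\exp(A(\theta))>0$ there, composing with the smooth map $\log$ shows $A$ itself is $C^\infty$ on that interior, its derivatives obtained from those of $\exp(A)$ by the chain rule. Next I would isolate the $\theta$-dependence of $V$ and $K$ into an exponential using $1/f_\theta(x)=\exp(A(\theta)-\theta^T T(x))/h(x)$, giving
\[
V(\theta)=\exp(A(\theta))\int \phi^2(x)f^2(x)h^{-1}(x)\exp(-\theta^T T(x))\;dx-I^2,
\]
\[
K(\theta)=\exp(3A(\theta))\int \phi^4(x)f^4(x)h^{-3}(x)\exp(-3\theta^T T(x))\;dx.
\]
Each remaining integral is a transform $L_g(\eta)$ in the variable $\eta=-\theta$ (respectively $\eta=-3\theta$) against the measure $d\nu=\phi^2f^2h^{-1}\,dx$ (respectively $\phi^4f^4h^{-3}\,dx$); by Theorem 2.7.1 each is $C^\infty$ on the interior of its own convergence region with differentiation under the integral. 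Since the prefactors $\exp(A)$ and $\exp(3A)$ are $C^\infty$ by the first step, $V$ and $K$ are $C^\infty$ as products of smooth functions (minus a constant).

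The one point needing care, and the only genuine obstacle, is aligning the domains so that \emph{both} factors are simultaneously smooth on the advertised set. Here I would invoke the containment already established in the excerpt, $\{K<\infty\}\subseteq\{V<\infty\}\subseteq\{A<\infty\}$: on $\{V<\infty\}$ the prefactor $\exp(A(\theta))$ is finite and positive, so the stripped integral is finite exactly where $V$ is, and likewise $\exp(3A(\theta))$ is finite on $\{K<\infty\}$; consequently the interiors coincide and on this common interior both factors are $C^\infty$. To confirm that the derivatives may be taken under the \emph{original} integrals, as in the display following Lemma~\ref{lm-diff}, I would note that the $x$-free prefactor $\exp(cA(\theta))$ commutes with the $x$-integral, so the Leibniz product rule reconciles ``differentiate the prefactor outside'' with ``differentiate everything under the integral,'' the equivalence of the two being precisely the differentiation-under-the-integral statement of Theorem 2.7.1 for the stripped transform. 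The only mild technical nuisance, the possible zeros of $h$ appearing through the powers $h^{-1}$ and $h^{-3}$, is absorbed by stating the theorem for a general $\sigma$-finite reference measure, so that these powers are folded into $\nu$ rather than surfacing as integrand singularities.
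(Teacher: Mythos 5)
Your proposal is correct and takes essentially the same approach as the paper: the paper's entire proof of this lemma is the citation of Theorem 2.7.1 of \cite{lehmann2005}, and your write-up simply supplies the details that citation leaves implicit---rewriting $\exp(A(\theta))$, $V(\theta)+I^2$, and $K(\theta)$ as Laplace transforms $\int g(x)\exp(\eta^T T(x))\,d\mu(x)$ against suitable measures, aligning the convergence regions via $\{K<\infty\}\subseteq\{V<\infty\}\subseteq\{A<\infty\}$, and reconciling the outside-the-integral prefactor $\exp(cA(\theta))$ with differentiation under the original integral by the product rule. The details you fill in (including folding $h^{-1}$ and $h^{-3}$ into the reference measure to avoid singularities where $h$ vanishes) are sound, so there is no gap.
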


We are now ready to prove the part of
the proof of Theorem~\ref{thm-main} we omitted.
\begin{proof}[Proof of $G<\infty$ in Theorem~\ref{thm-main}]
For any $i\in \{1,2,\ldots,p\}$, 
\[
\frac{\partial}{\partial \theta_i} K(\theta)=
3\int
\left(\frac{\partial}{\partial \theta_i} A(\theta)-T_i(x)\right)\frac{\phi^4(x)f^4(x)}{f_\theta^3(x)}\;dx
=3K(\theta)
\frac{\partial}{\partial \theta_i}
 A(\theta)-
3
\int T_i(x)\frac{\phi^4(x)f^4(x)}{f_\theta^3(x)}\;dx
\]
exists and is and continuous on $\intr\left\{\theta\mid K(\theta)<\infty\right\}$
by Lemma~\ref{lm-k}.
As we already know the first term is continuous by Lemma~\ref{lm-k},
this tells us the second term
is continuous.

Repeating this, we have
\begin{align*}
\frac{\partial^2}{\partial \theta_i^2} K(\theta)
=
\int
\left(3\frac{\partial^2}{\partial \theta_i^2} A(\theta)+
9\left(\frac{\partial}{\partial \theta_i}  A(\theta)\right)^2
-18T_i(x)
\frac{\partial}{\partial \theta_i}
A(\theta)+9T_i^2(x)\right)
\frac{\phi^4(x)f^4(x)}{f_\theta^3(x)}\;dx.
\end{align*}
We know the first 3 terms are continuous 
from what we just proved and Lemma~\ref{lm-k}.
So we conclude that
\[
\int
T_i^2(x)
\frac{\phi^4(x)f^4(x)}{f_\theta^3(x)}\;dx
\]
is a continuous function of $\theta$ on $\intr\left\{\theta\mid K(\theta)<\infty\right\}$.

Finally, we conclude that
\begin{align*}
&\expec_{X\sim f_\theta}
\left\|(\nabla A(\theta)-T(X))\frac{\phi^2(X)f^2(X)}{f_\theta^2(X)}\right\|_2^2\\
&=
\|\nabla A(\theta)\|_2^2K(\theta)
-2\nabla A(\theta)^T
\int
T(X)
\frac{\phi^4(X)f^4(X)}{f_\theta^3(X)}
\;dx
+
\int
\|T(X)\|_2^2
\frac{\phi^4(X)f^4(X)}{f_\theta^3(X)}
\;dx
\end{align*}
is a continuous function on the compact set $\Theta$
and therefore the supremum, $G^2$, is finite.
\end{proof}

Finally, we prove the CLT.
\begin{proof}[Proof of Theorem~\ref{thm-clt}]
First define
\[
Y_{ni}=
\left\{
\begin{array}{ll}
\frac{1}{\sqrt{n}}\left(\frac{\phi(X_i)f(X_i)}{f_{\theta_i}(X_i)}-I\right)&\text{for }i\le n\\
0&\text{otherwise}
\end{array}
\right.
\]
and
\[
J_{nm}=\sum^m_{i=1}Y_{ni}.
\]
Also define the $\sigma$\nobreakdash-algebras
\[
\mathcal{G}_m=\sigma(\theta_1,\theta_2,\ldots, \theta_{m+1},X_1,X_2,\ldots,X_m)
\]
for all $m$. Then for any given $n$, the process
$J_{n1},J_{n2},\ldots$ is a martingale with respect to $\mathcal{G}_1,\mathcal{G}_2,\ldots$
and to we have to prove
\[
J_{nn}=\sum^n_{i=1}Y_{ni}=
\sum^\infty_{i=1}Y_{ni}\stackrel{\mathcal{D}}{\rightarrow}
\mathcal{N}(0,V^\star).
\]

Define
\[
\sigma_{ni}^2=
\expec\left[
Y_{ni}^2
\mid \mathcal{G}_{i-1}
\right]=
\left\{
\begin{array}{ll}
\frac{1}{n}V(\theta_i)& \text{for }i\le n\\
0&\text{otherwise}.
\end{array}
\right.
\]
Then a form of the Martingale CLT, \emph{c.f.}, Theorem 35.12 of \cite{billingsley1995},
states that if
\[
\sum^n_{i=1}\sigma_{ni}^2\stackrel{\mathcal{P}}{\rightarrow}V^\star
\]
and
\[
\sum^n_{i=1}\expec Y^2_{ni}I_{\{|Y_{ni}|\ge \varepsilon\}}
\rightarrow 0
\]
for each $\varepsilon>0$, then $J_{nn}\stackrel{\mathcal{D}}{\rightarrow}\mathcal{N}(0,V^\star)$.

Since
\[
\sum^n_{i=1}\sigma^2_{ni}
=
\frac{1}{n}\sum^n_{i=1}V(\theta_i),
\]
and since, by Theorem~\ref{thm-main}, we have
\[
\frac{1}{n}\sum^n_{i=1}\left(\expec V(\theta_i)-V^\star\right)
=
\expec\left|\frac{1}{n}\sum^n_{i=1} V(\theta_i)-V^\star\right|
=\mathcal{O}(1/\sqrt{n})
\rightarrow 0,
\]
\ie, $\sum^n_{i=1}\sigma^2_{ni}$ converges to $V^\star$ in $L^1$,
we have
\[
\sum^n_{i=1}\sigma^2_{ni}
\stackrel{\mathcal{P}}{\rightarrow}V^*.
\]

Finally, since $\Theta\subseteq\left\{\theta\mid K(\theta)<\infty\right\}$
is a compact set and $K(\theta)$ is a continuous function 
by Lemma~\ref{lm-k},
we have
\[
B=\sup_{\theta\in \Theta}K(\theta)<\infty
\]
and we conclude
\BEAS
\sum^n_{i=1}\expec Y^2_{ni}I_{\{|Y_{ni}|\ge \varepsilon\}}
&=&\frac{1}{n}
\sum^n_{i=1}\expec 
\frac{\phi^2(X_i)f^2(X_i)}{f_{\theta_i}^2(X_i)}
I_{\left\{\phi^2(X_i)f^2(X_i)/f_{\theta_i}^2(X_i)\ge n\varepsilon^2\right\}}\\
&\le& \frac{1}{n^2\varepsilon^2}
\sum^n_{i=1}\expec 
\frac{\phi^4(X_i)f^4(X_i)}{f_{\theta_i}^4(X_i)}\le \frac{B}{n\varepsilon^2}\rightarrow 0.
\EEAS
Since this proves the conditions we need, applying the martingale CLT completes the proof.
\end{proof}

\bibliography{bibliography}{}
\bibliographystyle{plain}

\end{document}